\documentclass[12pt]{article}

\usepackage{amsmath,amssymb,amsfonts}
\usepackage{tikz,fullpage}
\usetikzlibrary{arrows,petri,topaths}
\usepackage{tikz-network}
\usepackage{fancybox}
\usepackage[noend,noline,ruled,scleft,nofillcomment]{algorithm2e}
\usepackage{caption} 
\usepackage{hyperref}

\newtheorem{theorem}{Theorem}
\newtheorem{lemma}[theorem]{Lemma}

\newtheorem{definition}[theorem]{Definition}
\newtheorem{observation}[theorem]{Observation}

\include{amssymb}

\newenvironment{proof}{\noindent {\sc Proof:}}{$\Box$ \medskip}
\newcommand{\eat}[1]{}
\newenvironment{proofof}[1]{\begin{trivlist} \item {\bf Proof
#1:~~}}
  {\qed\end{trivlist}}

\newcommand{\etal}{{\em et al. }}
\newcommand{\ie}{{\em i.e. }}
\newcommand{\eps}{\varepsilon}

\newcommand{\COMMENTED}[1]{{}}

\newcommand{\poly}{{\mathrm{poly}}}


\begin{document}

\title{An Estimator for Matching Size in Low Arboricity Graphs with Two Applications \footnote{This work is supported by the Iranian Institute for Research in Fundamental Sciences (IPM), Project Number 98050014.}}


\author{Hossein Jowhari \footnote{
Department of Computer Science and Statistics,
Faculty of Mathematics,
K. N. Toosi University of Technology. Email: jowhari@kntu.ac.ir}
}
\maketitle

\begin{abstract}
In this paper, we present a new simple degree-based
estimator for the size of maximum matching in bounded arboricity graphs.
When the arboricity of the graph is bounded by $\alpha$,
the estimator gives a $\alpha+2$ factor approximation
of the matching size. For planar graphs, we show
the estimator does better and returns
a $3.5$ approximation of the matching size. 

Using this estimator, we get 
 new results for approximating the matching size of planar graphs in the streaming and 
distributed models of computation. In particular, 
in the vertex-arrival streams, we get a randomized $O(\frac{\sqrt{n}}{\eps^2}\log n)$ space 
algorithm for approximating the matching size within $(3.5+\eps)$ factor in a 
planar graph on $n$ vertices. Similarly, we get a simultaneous 
protocol in the vertex-partition model for approximating the matching
size within $(3.5+\eps)$ factor using $O(\frac{n^{2/3}}{\eps^2}\log n)$ communication
from each player.  

In comparison with the previous estimators, the estimator in this paper does not need to know the 
arboricity of the input graph and improves the approximation factor
for the case of planar graphs.
\end{abstract}




\section{Introduction}
A matching in a graph $G=(V,E)$ is a subset 
of edges $M \subseteq E$ where no two edges in $M$ share an endpoint. 
A maximum matching of $G$
has the maximum number of edges among all possible matchings.
Here we let $m(G)$ denote the matching size of $G$, \ie the size of a maximum 
matching in $G$. 
In this paper, we present algorithms for approximating $m(G)$
in the sublinear models of computation. In particular, our results
fit the vertex-arrival stream model (also known as the adjacency list 
streams). In the vertex-arrival model, in contrast with the edge-arrival version
 where the input stream
is an arbitrary ordering of the edges, here each item in
the stream is a vertex of the graph followed by a list of its neighbors.

We also focus on graphs with bounded arboricity.
A graph $G=(V,E)$ has arboricity bounded by $\alpha$ if the edge set $E$ 
can be partitioned into at most $\alpha$ forests.
A well-known fact (known as the Nash-William theorem \cite{NW64}) states that a graph has 
arboricity $\alpha$, if and only if every induced subgraph on $t$ vertices 
has at most $\alpha (t-1)$ number of edges. 
Graphs with low arboricity 
cover a wide range of  graphs such as constant degree graphs, planar 
graphs, and graphs with small tree-widths. In particular planar graphs 
have arboricity bounded by $3$. 

A simple reduction from counting distinct elements
implies that computing $m(G)$ exactly
requires $\Omega(n)$ space complexity
even for trees and randomized algorithms
(see \cite{alon1999space} for the lower bound
on distinct elements problem.)
This  has initiated the study of finding 
{\em computationally-light}
estimators for $m(G)$ that take small space to compute.
With this focus,   
following the work by Esfandiari \etal \cite{EHLMO15}, there has been
a series of papers \cite{MV16,CormodeJMM17,MV18,BuryGMMSVZ19}
 that have designed estimators for
the matching size based on the degrees of vertices, edges
and the arboricity of the input graph. 
In this paper, we design another degree-based estimator for 
$m(G)$ in low arboricity graphs that has
certain advantages in comparison with the previous works
and leads to new algorithmic results. Before describing
our estimator we briefly review some of the previous ideas.   
In the discussions below, we assume $G$
has arboricity bounded by $\alpha$.

\paragraph*{Shallow edges, high degree vertices} 
Esfandiari \etal \cite{EHLMO15}
 were first to observe that one can approximately 
 characterize the matching size of low arboricity graphs
 based on the degree information of the vertices and the local neighborhood
 od the edges. 
 Let $H$ denote the set of vertices with degree more than
  $h=2\alpha+3$ and let $F$ denote the set of edges with both
   endpoints having degree
at most $h$.  Esfandiari \etal have shown that 
$m(G) \le |H|+|F| \le (5\alpha+9)m(G).$ 
 Based on this estimator,  the authors in \cite{EHLMO15} have designed a $\tilde{O}(\eps^{-2}\alpha n^{2/3})$
  space algorithm for approximating $m(G)$ within
  $5\alpha+9+\eps$ factor in the edge-arrival model. 
 
 \paragraph*{Fractional matchings} 
 By establishing an interesting connection with fractional matchings
 and the Edmonds Polytope theorem,
 Mcgregor and Vorotnikova \cite{MV16} have shown the following quantity
 approximates
 $m(G)$ within $(\alpha+2)$ factor. 
 $$(\alpha+1)\sum_{(u,v) \in E }\min \{\frac1{\deg(u)}, \frac1{deg(v)},\frac1{\alpha+1}\}.$$
  Based on this estimator, the authors in \cite{MV16} have given a $\tilde{O}(\eps^{-2}n^{2/3})$ 
 space streaming
 algorithm (in the edge-arrival model) that approximate $m(G)$ within
 $\alpha+2+\eps$ factor. Also in the same work, 
 another degree-based estimator is given that
 returns a $\frac{(\alpha+2)^2}2$ factor approximation of $m(G)$.
 A notable property of this estimator is that it can be implemented
 in the vertex-arrival stream model in $O(\log n)$ bits of space.
    
  \paragraph*{$\alpha$-Last edges} Cormode \etal \cite{CormodeJMM17} 
  (later revised by Mcgregor and Vorotnikova \cite{MV18}) have designed an
   estimator that depends on a given
  ordering of the edges. Given a stream of edges $S=e_1,\ldots,e_m$, 
  let $E_\alpha(S)$ denote a subset of edges
where $(u,v)\in E_\alpha(S)$ iff the vertices $u$ and $v$ both appear at most 
$\alpha$ times in $S$ after the edge $(u,v)$. 
It is shown that $m(G) \le |E_\alpha(S)| \le (\alpha+2)m(G).$
Moreover a $O(\frac1{\eps^2} \log^2n)$ space 
streaming algorithm is given that approximates $|E_\alpha(S)|$ within $1+\eps$ factor
in the edge-arrival model.  

\subsection{The estimator in this paper} The new estimator
 is purely based on the degree of the 
vertices in the graph without any dependence on $\alpha$. 
To estimate the matching size, we count the 
number of what we call {\em locally superior} vertices in the graph. Namely,

\begin{definition}
In graph $G=(V,E)$, we call $u \in V$ a locally superior vertex
 if $u$ has a 
neighbor $v$ such that $deg(u) \ge deg(v)$. We let $\ell(G)$ 
denote the number of locally superior vertices in $G$.   
\end{definition}
 
We show if the arboricity of $G$ is bounded by $\alpha$, then 
$\ell(G)$ approximates $m(G)$ within $(\alpha+2)$ factor
(Lemma \ref{lem:locally superior}.)
This repeats the same bound obtained by the estimators 
in \cite{MV16} and \cite{CormodeJMM17},
 however for planar graphs, we show that the approximation factor is 
 at most $3.5$ which beats the previous bounds (Lemma \ref{lem:planar})
 \footnote{We do not have tight examples for our analysis.
 In fact, we conjecture that $\ell(G)$ approximates $m(G)$
 within $3$ factor when $G$ is planar.}.
  As an evidence, consider the 4-regular planar
graph on $9$ vertices.  Both of the estimators in 
\cite{MV16} and \cite{CormodeJMM17},
report $18$ as the estimation for $m(G)$ while the exact answer
is $4$. It follows their approximation factor is at least $4.5$. 

Unfortunately,  the new estimator, in spite of
its simplicity, does not seem to be applicable 
in the edge-arrival
model without an extra pass over the stream. 
To decide if a vertex is locally superior, we
need to know its neighbors and learn their degrees which 
becomes burdonsome in one pass. However in the 
vertex-partition model, we 
can obtain this information in one pass and consequently
can achieve sublinear space bounds. More formally, we
get a randomized $O(\frac{\sqrt{n}}{\eps^2}\log n)$ space
algorithm for approximating $m(G)$ within $(3.5+\eps)$ factor
in this model. In terms of
approximation factor, this improves over existing
sublinear algorithms \cite{MV16,MV18}.

As another application of our estimator, we get a sublinear simultaneous
 protocol in the {\em vertex-partition} 
model for approximating 
$m(G)$ when $G$ is planar. In this model, vertex set $V$ is partitioned
into $t$ subsets $V_1,\ldots,V_t$ where each subset is given to a player.
 The $i$-th player knows the edges on $V_i$. 
 The players do not communicate with each other. They only send one message to a {\em referee}
  whom at the end computes an approximation of the matching size. (The referee does not get any part of the input.)
  We assume the referee and the players have a shared source of randomness. Within this setting, we design a protocol that approximates 
  $m(G)$ within $3.5+\eps$ factor using $O(\frac{n^{2/3}}{\eps^2}\log n)$ communication
  from each player. 
  Note that for $t > 3$ and $t = o(n^{1/3})$, this result
is non-trivial. The best previous result implicit in the works of \cite{CCEHMMV16, MV16} computes
a  $5+\eps$ factor approximation using $\tilde{O}(n^{4/5})$ communication
from each player. We should also mention that, based on
the estimator in \cite{MV16}, there is a simultaneous 
protocol that reports a $12.5$ factor approximation
of $m(G)$ using $O(\log n)$ communication.

\section{Graph properties}
\label{sec:est}
In the following proofs, we let $M \subseteq E$ denote a maximum matching in graph $G$. 
When the underlying graph is clear from the context,
for the vertex set $S$,
we use $N(S)$ to denote the neighbors of the vertices in $S$ excluding $S$ itself.
For vertex $u$, we simply use $N(x)$ to denote the neighbors of $u$. 
The vertex $v$ is a neighbor of
the edge $(x,y)$ if $v$ is adjacent with $x$ or $y$. 
When $x$ is paired with $y$ in the matching $M$, 
abusing the notation, we define $M(x)=y$. 

\begin{lemma} 
\label{lem:locally superior}
Let $G=(V,E)$ be a graph with arboricity $\alpha$. We have
$$m(G) \le \ell(G) \le (\alpha + 2) m(G).$$
\end{lemma}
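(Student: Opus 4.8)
The plan is to prove the two inequalities separately. For the lower bound $m(G) \le \ell(G)$, I would argue that every maximum matching $M$ can be injected into the set of locally superior vertices. Given an edge $(x,y) \in M$, at least one of its endpoints has degree at least that of the other, so at least one endpoint of each matched edge is locally superior (with its partner serving as the witnessing neighbor). This already gives one locally superior vertex per matched edge, but I need these to be distinct across different edges of $M$ — which they are, since $M$ is a matching and no vertex is shared. Hence $\ell(G) \ge |M| = m(G)$.

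For the upper bound $\ell(G) \le (\alpha+2)m(G)$, I would use a charging argument that assigns each locally superior vertex to a nearby matched edge and bounds the load. Let $L$ be the set of locally superior vertices. Split $L$ into those that are themselves matched by $M$ (at most $2m(G)$ of them) and those that are unmatched, call this set $U$. Each $u \in U$ has a neighbor $v$ with $\deg(u) \ge \deg(v)$; since $M$ is maximal, every unmatched vertex has all neighbors matched, so in particular I can find a neighbor of $u$ that lies in some matched edge. I would then charge $u$ to an appropriate matched edge, and the crux is to show no matched edge receives more than $\alpha$ charges from $U$ (giving $|U| \le \alpha \cdot m(G)$ and hence $\ell(G) \le (\alpha+2)m(G)$). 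To control the load I would invoke the arboricity bound via Nash-Williams: orient the edges so that each vertex has out-degree at most $\alpha$ in some acyclic orientation, or more directly, use the "$\alpha(t-1)$ edges on $t$ vertices" characterization on a carefully chosen induced subgraph consisting of one matched edge together with all the unmatched locally superior vertices charged to it.

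The main obstacle, as usual with these estimators, is defining the charging rule so that the per-edge load is provably at most $\alpha$. The degree condition $\deg(u)\ge\deg(v)$ for the witness neighbor $v$ of a locally superior vertex $u$ should be what prevents too many low-degree vertices from piling onto a single matched edge: if many unmatched superior vertices all point at the same matched vertex $w$, then $w$ has high degree, but then those superior vertices also have high degree (each at least $\deg$ of its witness), and a dense high-degree neighborhood around an edge contradicts bounded arboricity. I expect the careful version of this to require summing degrees over the charged vertices and comparing with the edge count allowed by Nash-Williams, possibly after handling a few boundary cases (e.g. a superior vertex whose only qualifying witness is another unmatched superior vertex) by a small local rerouting of charges. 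Getting the constant exactly $\alpha+2$ rather than something slightly worse is where the accounting has to be tight.
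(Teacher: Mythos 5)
Your lower bound argument is correct and is the same as the paper's: in each matched edge the endpoint of larger (or equal) degree is locally superior with its partner as witness, and these vertices are distinct because the edges of $M$ are disjoint.

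The upper bound, however, has a genuine gap: the entire technical core --- the charging rule and the proof that the load stays bounded --- is left as a plan, and the heuristic you offer for why it should work does not hold up. When you charge an unmatched locally superior vertex $u$ to a matched neighbor $w$, the witness of $u$ (the neighbor $v$ with $\deg(u)\ge\deg(v)$) need not be $w$, so ``many charged vertices force $w$, and hence all of them, to have high degree'' is not valid; and even granting high degrees near one matched edge, that alone does not contradict arboricity, since Nash--Williams only bounds the number of edges \emph{inside} the chosen induced subgraph, and the incident edges may leave it. Moreover, a uniform per-edge load of at most $\alpha$ is not what the actual argument establishes. The paper's proof is two-phase: first a greedy assignment that routes unmatched locally superior vertices only through matched vertices $y$ having at most $\alpha-1$ neighbors in the unmatched set, so each matching edge used in this phase absorbs at most $\alpha-1$ vertices; then, for the residual set $X_2$, an \emph{aggregate} count. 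That count needs two facts your sketch never supplies: (i) every $x\in X_2$ has degree at least $\alpha+1$ (otherwise its witness $y$ would have degree at most $\alpha$, yet $y$ must still have at least $\alpha$ neighbors in $X_2$ plus its matching edge, a contradiction), and (ii) by an augmenting-path argument, each matching edge contributes at most one endpoint to $N(X_2)$, so $|N(X_2)|=|M_2|$. Plugging these into the arboricity bound for the bipartite graph between $X_2$ and $N(X_2)$ gives $(\alpha+1)|X_2|\le \alpha\bigl(|X_2|+|N(X_2)|\bigr)$, hence $|X_2|\le\alpha|M_2|$. Without the degree lower bound and the halving $|N(X_2)|=|M_2|$ (the latter requires augmenting paths, which you never invoke), the same counting only yields a constant strictly worse than $\alpha+2$. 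So your decomposition $|L|\le 2|M|+|U|$ is the right frame, but the claim $|U|\le\alpha|M|$ --- the heart of the lemma --- is exactly what remains unproved in your proposal.
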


\begin{proof} 
The left hand side of the inequality is easy to show. For every edge in $E$, at least one
of the endpoints is locally superior. Since edges in $M$ are disjoint, 
at least $|M|$ number of endpoints must be locally superior. This proves $m(G)\le \ell(G)$. 

To show the right hand side, we use a charging argument.
Let $L$ denote the locally superior vertices in $G$. Our goal is to show an upper bound on $|L|$ in terms 
of $|M|$ and $\alpha$.
Let $X \subseteq L$ be the set of 
locally superior vertices that are NOT endpoints of a matching edge. 
The challenge is to prove an upper on $|X|$. 

 The vertices in $X$ do not contribute to the maximum 
matching. However all the vertices in $N(X)$ must be endpoints of 
matching edges (otherwise $M$ would not be a maximal matching.)
For the same reason, there cannot be an edge between the vertices in $X$. 
To prove an upper bound on $|X|$, in the first step,
using an assignment procedure, we assign a subset of
 vertices in $X$
to edges in $M$ in a way any target edge gets at most $\alpha-1$ 
locally superior vertices.  We do the assignments
in the following way. 

\paragraph*{The Assignment Procedure}
If we find a $y \in N(X)$ with at most $\alpha-1$ neighbors
in $X$, we assign all the neighbors of $y$ in $X$ to the matching edge
$(y,M(y))$.
 We repeat this process, every time 
picking a vertex in $N(X)$ with  less than 
$\alpha$ neighbors in $X$ and do the assignment 
that we just described,
 until we cannot find such a 
vertex in $N(X)$.
   Note that when we assign a locally superior vertex 
   $x$, we remove the edges on $x$ before continuing the procedure.

\vspace{0.5cm}
Here we emphasize the fact that if $y$ has a neighbor $x \in X$, then $M(y)$ cannot have neighbors in $X \setminus \{x\}$ (otherwise it would create an augmenting path and contradict the optimality of 
$M$.)

\vspace{0.5cm}
  Let $X_1 \subseteq X$ be the assigned locally superior vertices and
  $M_1 \subseteq M$ be the used matching edges in the assignment procedure.
   We have 
\begin{equation}
\label{eqn:X1}
   |X_1| \le (\alpha-1)|M_1|.  
\end{equation}   
%
%
\vspace{0.5cm}
Let $X_2 = X \setminus X_1$ be the unassigned vertices in $X$.
Now we try to prove an upper bound on $|X_2|$. For this, we need
to make a few observations. 
 \begin{observation}
 \label{obs:Y}
  Let $Y_2=N(X_2)$. 
 The pair $y$ and $M(y)$ cannot be both in $Y_2$. 
 \end{observation}
\begin{proof}
Suppose $y$ and $M(y)$ are both in $N(X_2)$.
Let $B$ and $C$ be the neighbors of $y$ and $M(y)$ in $X_2$ 
respectively. If $|B \cup C| > 1$, 
then one can find an augmenting path of length $3$ (with respect to $M$.)
 A contradiction.

On the other hand, if $|B \cup C| =1$, then
$y$ and $M(y)$ have only a shared neighbor $x \in X_2$ which means the edge 
$e=(y,M(y))$ should have been used by the assignment procedure and
as result $x \in X_1$. Another contradiction.
\end{proof}

 \begin{observation}
 \label{obs:X2degree}
Every vertex $x \in X_2$ has degree at least $\alpha+1$.
 \end{observation}
 \begin{proof}
 Consider $x \in X_2$. Suppose, for the sake 
 of contradiction, $\deg(x)$ 
 is $k$ where $k \le \alpha$. Since
 $x$ is a locally superior vertex, there must be a $y \in N(x)$ with degree
 at most $k$ in $G$. We know that $y$ is an endpoint of a matching edge. 
 In the assignments procedure, whenever we used an edge $e \in M$
 all the neighbors of its endpoints (in $X$) were assigned. Since $x$ is not 
 assigned yet,
 it means the edge $(y,M(y))$ has not been used. Consequently 
 $y$ must have at least $\alpha$ neighbors in $X_2$. Counting the edge
 $(y,M(y))$, we should have $\deg(y) \ge \alpha+1$. A contradiction. 
 \end{proof}

  Let $G'=(X_2 \cup Y_2,E')$ be a bipartite graph where $E'$ is the set
  of edges between $X_2$ and $Y_2$.  
 From Observation \ref{obs:X2degree}, we have

\begin{equation}
\label{eq:lowerboundE'}
  (\alpha+1)|X_2|\le |E'|. 
\end{equation}

Since $G'$ is a subgraph of $G$, its arboricity is bounded by $\alpha$. As result,

\begin{equation}
\label{eq:upperboundE'}
 |E'| \le \alpha(|X_2|+|Y_2|). 
\end{equation}

Recall that $Y_2$ are endpoints of matching edges.  Let $M_2$ be those matching edges. 
Observation \ref{obs:Y} implies that $|Y_2|=|M_2|$.
As result, combining (\ref{eq:lowerboundE'}) and (\ref{eq:upperboundE'}), we get the following.

\begin{equation}
 |X_2| \le \alpha|Y_2|=\alpha|M_2|.
\end{equation}
To prove an upper bound on $|L|$, we also need to count the locally superior vertices
that are endpoints of matching edges. Let 
 $Z  = L\setminus X$. We have 
$|Z|  \le 2|M|.$ Summing up, we get
\begin{align*}
|L|&=|X_1| + |X_2| + |Z| \\
        & \le (\alpha -1)|M_1| + \alpha|M_2|  + 2|M|  \\
         & = \alpha(|M_1| +|M_2|) + 2|M|-|M_1|  \\
                  & \le (\alpha +2)|M| - |M_1|  \\
                                    & \le (\alpha +2)|M| 
\end{align*}
This proves the lemma.
\end{proof}

\begin{lemma} 
\label{lem:planar}
Let $G=(V,E)$ be a planar graph.  We have $ \ell(G) \le 3.5 m(G)
$.
\end{lemma}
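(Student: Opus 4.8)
The plan is to re-run the charging argument of Lemma~\ref{lem:locally superior} with $\alpha = 3$ and to harvest two extra savings from planarity. As there, let $L$ be the set of locally superior vertices, let $Z\subseteq L$ be those that are endpoints of matching edges (so $|Z|\le 2m(G)$), let $X = L\setminus Z$ be the locally superior vertices missed by $M$, and run the Assignment Procedure to write $X = X_1\sqcup X_2$ with associated matching-edge sets $M_1,M_2\subseteq M$. Recall the facts used in that proof: $M_1$ and $M_2$ are disjoint (once the procedure uses an edge $(y,M(y))$, the emphasized fact and the edge-removal step leave neither $y$ nor $M(y)$ adjacent to any surviving $X$-vertex, hence neither lies in $N(X_2)$), $|X_1|\le 2|M_1|$, every $x\in X_2$ has $\deg(x)\ge 4$ (Observation~\ref{obs:X2degree}), and $|Y_2| = |M_2|$ for $Y_2 = N(X_2)$ (Observation~\ref{obs:Y}).

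The first planarity gain improves the $X_2$ bound. The auxiliary graph $G' = (X_2\cup Y_2, E')$ is not merely of arboricity $\le 3$; it is also \emph{bipartite} (every edge of $E'$ runs between $X_2$ and $Y_2$, and $X$ has no internal edges) and planar as a subgraph of $G$. Hence Euler's formula gives $|E'|\le 2(|X_2|+|Y_2|) - 4$, while $|E'|\ge 4|X_2|$ since each $x\in X_2$ sends all of its $\ge 4$ incident edges into $Y_2$. Combining, $2|X_2|\le 2|Y_2| - 4$, so $|X_2|\le |Y_2| = |M_2|$, replacing the generic $|X_2|\le 3|M_2|$. Substituting back, $|L| = |Z|+|X_1|+|X_2|\le 2m(G) + 2|M_1| + |M_2|$, and with $|M_1|+|M_2|\le m(G)$ this is already $\le 4m(G)$ — but not yet $3.5\,m(G)$.

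The second, harder, gain has to come from the $X_1$ side, and this is the step I expect to be the main obstacle. One natural route is to refine the Assignment Procedure so that at each step it picks a vertex of $N(X)$ with the \emph{fewest} surviving $X$-neighbors (still at most $\alpha - 1 = 2$). This splits $M_1 = M_1^{(1)}\sqcup M_1^{(2)}$ according to whether the picked endpoint absorbed one or two $X$-vertices, with $|X_1| = |M_1^{(1)}| + 2|M_1^{(2)}|$, so the estimate becomes $|L|\le 2m(G) + |M_1^{(1)}| + 2|M_1^{(2)}| + |M_2|\le 3m(G) + |M_1^{(2)}|$, and it suffices to prove $|M_1^{(2)}|\le \tfrac12 m(G)$ — i.e. at most half the matching edges ever absorb two locally superior non-matching vertices. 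The structural ingredients I would use are: (a) for any edge $(y,M(y))$ used by the procedure, maximality of $M$ forces $M(y)$ to have \emph{no} strictly-inferior neighbor and at most one neighbor in $X$, since otherwise a length-three augmenting path appears through $y$; and (b) whenever an $M_1^{(2)}$-edge absorbs $x_1,x_2\in X$, every other matching endpoint adjacent to $x_1$ or $x_2$ loses an $X$-neighbor, and planarity caps how many such endpoints exist, so round-2 steps keep creating round-1 (single-absorption) opportunities. Turning (a)--(b) into the bound $|M_1^{(2)}|\le\tfrac12 m(G)$ — or, more ambitiously, into the statement that a doubly-absorbing edge cannot have both endpoints locally superior, which would give the conjectured factor $3$ — will, I expect, require one more Euler-formula count, on the bipartite planar graph spanned by the endpoints of $M_1^{(2)}$ together with the $X$-vertices they absorb. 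Once such a bound is available, every matching edge is charged at most $3.5$ on average and the lemma follows.
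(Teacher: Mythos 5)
Your first stage is sound: running the general Assignment Procedure with $\alpha=3$, using Observation~\ref{obs:X2degree} to get $\deg(x)\ge 4$ on $X_2$, and replacing the arboricity count by the bipartite--planar bound $|E'|\le 2(|X_2|+|Y_2|)-4$ does give $|X_2|\le|M_2|$ and hence $|L|\le 3m(G)+|M_1|\le 4m(G)$. But the statement to be proved is $\ell(G)\le 3.5\,m(G)$, and the step that would close the gap from $4$ to $3.5$ is exactly the part you leave unproved: you reduce everything to the claim $|M_1^{(2)}|\le\tfrac12 m(G)$ and then only sketch ingredients (a)--(b) for it, explicitly deferring the needed Euler-formula count. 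As written, ingredient (a) is also partly unjustified: that $M(y)$ has at most one neighbor in $X$ does follow from optimality of $M$ (it is the paper's ``emphasized fact''), but ``$M(y)$ has no strictly-inferior neighbor'' is a degree condition that no augmenting-path argument gives you. So the proposal establishes a correct $4m(G)$ bound but contains a genuine gap at the decisive step, and it is not clear that your proposed intermediate claim is even true, let alone provable by the route you indicate.

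It is worth noting that the paper extracts the extra $0.5$ from a different place. It modifies the Assignment Procedure so that each used matching edge absorbs only one vertex of $X$ (giving $|X_1|\le|M_1|$ and degree $\ge 3$ on $X_2$), derives the bound $|L|\le 3|M|+|M_2|-|M_3|$ with $M_3=M\setminus(M_1\cup M_2)$, and then proves a second, complementary bound $|L|\le 3|M|+|M_1|+|M_3|$; averaging the two yields $3.5$. The second bound is the hard part, and its savings come from the matched side, not from $X_1$: one must show that certain matched endpoints (the sets $Y''$ and $Y'''$ built from $Y'=\{M(y):y\in Y_1\cup Y_2\}$) cannot all be locally superior, via an injective assignment into $Y_1\cup M_3$ (Observation~\ref{obs:Y''}) and a second planarity count on an enlarged bipartite graph containing $X_2\cup Y'''$. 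Your plan takes the crude bound $|Z|\le 2|M|$ on matched locally superior vertices as final and seeks all further savings in $M_1$, which is precisely where the paper's argument shows no easy saving is available; to complete your route you would need to either prove $|M_1^{(2)}|\le\tfrac12 m(G)$ from scratch or import an analysis of the matched endpoints in the spirit of the paper's $Y''/Y'''$ argument.
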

\begin{proof}
For planar graphs, similar to what we did
in the proof of Lemma \ref{lem:locally superior}, we first try
to assign some of the vertices in $X$ to the matching edges using a simple 
assignment procedure. (Recall that $X$ is the set of 
vertices in $L$ that are not endpoints of edges in $M$.) 

\paragraph{The Assignment Procedure} Let $Y_1 =\emptyset$.
If we find a $y \in N(X)$ with only $1$ neighbor $x \in X$, we assign 
$x$ to the matching edge
$(y,M(y))$. Also we add $y$ to $Y_1$.
 We continue the procedure until we cannot find such a 
vertex in $N(X)$.
   Note that when we assign a locally superior vertex $x$, we remove the edges on $x$.

\vspace{0.5cm}
  Let $X_1 \subseteq X$ be the assigned locally superior vertices and
  $M_1 \subseteq M$ be the used matching edges in the assignment procedure. Note that $|Y_1|=|M_1|$.
   We have 
\begin{equation}
\label{eqn:planarX1}
   |X_1| \le |M_1|.  
\end{equation}  

Let $X_2 = X \setminus X_1$.
Using a similar argument that we used for proving Observation \ref{obs:X2degree},
we can show every vertex in $X_2$ has degree at least $3$. 
Also letting $Y_2=N(X_2)$, we observe that $y \in Y_2$ and $M(y)$ cannot
be both in $Y_2$ as we noticed in the Observation \ref{obs:Y}. Let $M_2 \subseteq M$ be the matching 
edges with one endpoint
in $Y_2$. We have $|Y_2|=|M_2|$.
 
 Now consider the bipartite graph $G'=(X_2 \cup Y_2,E')$ where $E'$ is the set
  of edges between $X_2$ and $Y_2$. Every planar bipartite graph with $n$ vertices has
at most $2n-4$ edges {\footnote {For a short proof of this, combine the 
Euler's formula $|V|-|E|+|F|=2$
with the inequality $2|E|\ge 4|F|$ caused by each face having at least 4 
sides (since there are no odd cycles) and we get $|E|\le 2|V|-4.$}}. 
Since $G'$ is a bipartite planar graph, it follows,
  
  \begin{equation}
\label{eqn:planarE'}
   3|X_2| \le |E'| < 2(|X_2|+|Y_2|)= 2(|X_2|+|M_2|).  
\end{equation}  

This shows $|X_2| < 2|M_2|$.
Letting $Z=L \setminus X$ and $M_3=M\setminus(M_1 \cup M_2)$, we get 

\begin{equation}
\label{eqn:R_upperbound1}
|L|=|X_1| + |X_2| + |Z| \le |M_1| + 2|M_2|  + 2|M|  
         \le 3|M| +|M_2| -|M_3|. 
\end{equation}  

This already proves $|L|$ is bounded by $4|M|$. To prove the bound claimed in the lemma,
 we also show that $|L| \le 3|M|+|M_1|+|M_3|$. Combined with the
  inequality (\ref{eqn:R_upperbound1}),
this proves the lemma. 

\vspace{0.5cm}
Let $Y=Y_1 \cup Y_2$. Note that $Y$ are one side of
 the matching edges in $M_1 \cup M_2$. Let
$Y'=\{ M(y) \; | \; y \in Y\}.$ 
We use a special subset
of $Y'$, named $Y''$ which is defined as follows. 
We let $Y''$ denote the locally superior vertices in $Y'$ that have degree $2$ or they
are adjacent with both endpoints of an edge in $M_3$. 
We make the following observation regarding the vertices
in $Y''$.

\begin{observation}
\label{obs:Y''}
We can assign each vertex $y' \in Y''$ to a distinct 
$e \in Y_1 \cup M_3$ where
$e$ has no neighbor in $Y'\setminus \{y'\}$.
\end{observation}

\begin{proof}
Consider $y' \in Y''$. If $y'$ is adjacent with 
both endpoints of an edge $e=(z,z') \in M_3$, we assign
$y'$ to $e$ (when there are multiple edges with this
condition we pick one of them arbitrarily.) Note that $z$ and $z'$ cannot have neighbors in $Y'$
other than $y'$ because otherwise it would create an
augmenting path.  

Now suppose $y'$ has degree $2$. Since $y'$ is a locally superior vertex,
it must have a neighbor $z$ of degree at most $2$. The 
neighbor $z$ cannot be in $Y_2 \cup X_2$ because
the vertices in $Y_2 \cup X_2$ have degree at least $3$.
 We distinguish between two cases.
\begin{itemize}

\item $M(y') \in Y_2$. In this case, $z$ cannot be in $Y_1$ either 
because the vertices in $Y_1$ are already
of degree $2$ without $y'$. Also $z \notin X_1$ because otherwise
 it would create an augmenting path.
The only possibility is that $z$ is an endpoint of a matching
 edge in $M_3$. We assign
$y'$ to the matching edge $(z,z') \in M_3$.  Note that
 $z'$ cannot have a neighbor in $Y'\setminus \{y'\}$
because it would create an augmenting path. 

\item $M(y') \in Y_1$. Here $z$ could be in $X_1$. 
If this is the case, then $M(y')$ cannot have
a neighbor in $Y' \setminus \{y'\}$ because it would create an 
augmenting path. In this case, 
we assign $y'$ to $M(y')$. If $z = M(y')$, then again 
we assign $y'$ to $M(y')$. The only
remaining possibility is that $z$ an endpoint of a 
matching edge in $M_3$ which we handle it similar to
the previous case.  
\end{itemize}
\end{proof}

 \begin{figure}
\centering
\begin{tikzpicture}[scale=0.5,transform shape]

  \Text[x=-1.8,y=0]{\Large $Y_2$}         
  \Vertex[x=0,y=0]{Y21}
  \Vertex[x=2.5,y=0]{Y22}
  \Vertex[x=5,y=0]{Y23}
  \Vertex[x=7.5,y=0]{Y24}
  
    \Text[x=20.3,y=0]{\Large $Y_1$}         
 \Vertex[x=11,y=0]{Y11}
\Vertex[x=13.5,y=0]{Y12}
  \Vertex[x=16,y=0]{Y13}
  \Vertex[x=18,y=0]{Y14}
  
    \Text[x=-2.2,y=2]{\Large $Y'$}         
  \Vertex[x=0,y=2]{Y'21}
  \Vertex[x=2.5,y=2]{Y'22}
  \Vertex[x=5,y=2,color=white]{Y'23}
  \Vertex[x=7.5,y=2]{Y'24}
  
 \Vertex[x=11,y=2,color=white]{Y'11}
\Vertex[x=13.5,y=2]{Y'12}
  \Vertex[x=16,y=2]{Y'13}
  \Vertex[x=18.5,y=2,color=white]{Y'14}
   \Text[x=-1.8,y=-2.3]{\Large $X_2$}         
  \Vertex[x=0,y=-2.3]{X21}
  \Vertex[x=2.5,y=-2.3]{X22}
  \Vertex[x=5,y=-2.3]{X23}
  \Vertex[x=7.5,y=-2.3]{X24}
 
    \Text[x=20.3,y=-2.3]{\Large $X_1$}         
 \Vertex[x=11,y=-2.3]{X11}
\Vertex[x=13.5,y=-2.3]{X12}
  \Vertex[x=16,y=-2.3]{X13}
  \Vertex[x=18.5,y=-2.3]{X14}

    \Text[x=-2.3,y=4]{\Large $M_3$}           
    \Vertex[x=5,y=4]{M31}
  \Vertex[x=7,y=4]{M32}
  \Vertex[x=14,y=4]{M33}
 \Vertex[x=16,y=4]{M34}
  \Vertex[x=9,y=4]{M35}
    \Vertex[x=11,y=4]{M36}

    \draw (3.7,0) ellipse (5cm and 1cm);
        \draw (14.8,0) ellipse (5cm and 1cm);
                \draw (3.7,-2.2) ellipse (5cm and 1cm);
        \draw (14.8,-2.2) ellipse (5cm and 1cm);
      \draw (9.2,2) ellipse (11cm and 0.7cm);
      \draw (9.2,4) ellipse (11cm and 0.7cm);

        \draw (4.8,-0.17) -- (4,-0.8);
        \draw (5.2,-0.17) -- (6,-0.8);

     \draw (2.4,-2) -- (1.8,-1.5);
     \draw (2.5,-2) -- (2.5,-1.5);
     \draw (2.6,-2) -- (3.2,-1.5);

\Edge[lw=4pt](M31)(M32)
\Edge[lw=4pt](M33)(M34)
\Edge[lw=4pt](M35)(M36)

\Edge[lw=4pt](Y11)(Y'11)
\Edge[lw=4pt](Y12)(Y'12)
\Edge[lw=4pt](Y13)(Y'13)
\Edge[lw=4pt](Y14)(Y'14)
\Edge[lw=4pt](Y21)(Y'21)
\Edge[lw=4pt](Y22)(Y'22)
\Edge[lw=4pt](Y23)(Y'23)
\Edge[lw=4pt](Y24)(Y'24)
\Edge[lw=1pt](M36)(Y'11)

\Edge[lw=1pt,style=dashed](Y'13)(Y14)

\Edge[lw=1pt](M31)(Y'23)
\Edge[lw=1pt](M33)(Y'12)
\Edge[lw=1pt](M33)(Y'13)

\Edge[lw=1pt](Y11)(X11)
\Edge[lw=1pt](Y12)(X12)
\Edge[lw=1pt](Y13)(X13)
\Edge[lw=1pt](Y14)(X14)
\Edge[lw=1pt](Y'12)(Y11)
\Edge[lw=1pt](Y'13)(Y12)
\Edge[bend=35, lw=1pt](Y'14)(X14)

\end{tikzpicture}
\caption{
A demonstration of the construction in the proof of lemmas \ref{lem:locally superior} and \ref{lem:planar}. Thick
edges represent matching edges. The unfilled vertices belong to the set $Y''$.  
}
\label{fig_graph}
\end{figure}
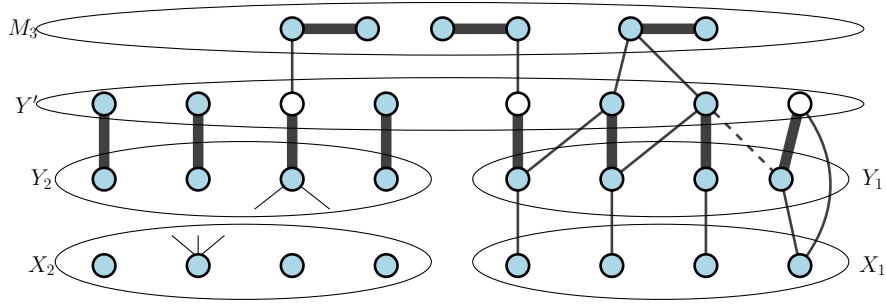

Now, assume we assign the vertices in $Y''$ to the elements in 
$Y_1 \cup M_3$ according to the above
observation. Let $Y_{1}' \subseteq Y_1$ and $M_{3}' \subseteq M_3$ be the vertices and edges 
that were used in the assignment. 
Let $Y'''$ be the remaining locally superior vertices in $Y'$. Namely, $Y'''= (L\cap Y')\setminus Y''$.
Before making the final point, we observe that only
one endpoint of the edges in $M_3$ are adjacent with vertices in $Y'''$. 
Let $Y_3$ be the endpoint of edges in $M_3 \setminus M_3'$ that have
neighbors in $Y'''$.
  Consider the bipartite graph $G''(V'',E'')$
where $$V''=(X_2 \cup Y''') \cup \big(Y_2  \cup (Y_1 \setminus Y_{1}')  \cup Y_3\big)$$
and $E''$ is the set of edges between $X_2$ and $Y_2$, and the edges
between $Y'''$ and $ Y_2 \cup (Y_1 \setminus Y_1')\cup Y_3$.

Relying on the facts that $G''$ is a planar bipartite graph, $Y'''$ is composed
of vertices with degree at least $3$, and the edges on $Y'''$ are all
in $E''$, we have
$$ 3|X_2| + 3|Y'''| \le |E''| \le 2 (|X_2|+|Y_2|+|Y_1\setminus Y_1'|+|Y'''|+|Y_3|).$$
It follows,
\begin{align*}
|X_2| + |Y'''|  & \le 2(|Y_2|+|Y_1\setminus Y_1'|+|Y_3|) \\
                & \le 2(|M_2|+|M_1|-|Y_1'|+|M_3|-|M_3'|) \\
                & = 2(|M|-|Y_1'|-|M_3'|) 
\end{align*}
Since $|Y''|=|Y'_1|+|M'_3|$, we get
\begin{equation}
\label{eq:Y'''}
|X_2| + |Y'''|  \le 2|M|-2|Y''|
\end{equation}

Let $Z_1$, $Z_2$ and $Z_3$ denote the locally superior vertices that
are endpoints of matching edges in $M_1$, $M_2$ and $M_3$ respectively.
From the definition of $Y''$ and $Y'''$, we have
\begin{equation}
\label{eq:Z1Z2}
|Z_1| + |Z_2| \le |M_1| + |M_2| + |Y''| +|Y'''|
\end{equation}
From (\ref{eq:Y'''}) and (\ref{eq:Z1Z2}), we get
\begin{align*}
|L|  & = |X_1| + |X_2| + |Z_1| + |Z_2| + |Z_3| \\
     & \le |M_1| + |X_2| + (|M_1| + |M_2| + |Y''| +|Y'''|) + 2|M_3|\\
          & = 2|M_1| + (|X_2|+|Y'''|) + |M_2| + |Y''| + 2|M_3|\\
          & \le 2|M_1|  + |M_2| + 2|M| - |Y''| + 2|M_3|\\
          & = 3|M|+ |M_1| + |M_3| -|Y''|\\
          & \le 3|M|+ |M_1| + |M_3| 
\end{align*}
This
finishes the proof 
of the lemma. 
\end{proof}

\section{Algorithms}
\label{sec:alg} 
We first present a high-level sampling-based estimator 
for $\ell(G)$. Then we show how this estimator can
be implemented in the streaming and distributed settings
 using small space and communication. For our
 streaming result, we use a combination of the estimator 
 for $\ell(G)$ and the greedy maximal matching algorithm.
For the simultaneous protocol, we use
the estimator for $\ell(G)$ in combination with the 
edge-sampling primitive in \cite{CCEHMMV16}
and an estimator in \cite{MV16}.

 The high-level estimator (described in Algorithm \ref{alg:locally superior}) samples a 
 subset of vertices $S \subseteq V$ and computes the locally superior vertices in $S$. 
 The quantity $\ell(G)$ is estimated from the scaled 
 ratio of the locally superior vertices in the sample set. 

\begin{algorithm}[th]

\medskip

Run the following estimator $r=\lceil \frac 8{\epsilon^2}\rceil $ number of 
times in parallel. In the end, report the average of the outcomes. \medskip

\hspace{1cm} 1. Sample $s$ vertices (uniformly at random) from $V$ without replacement. 

\hspace{1cm} 2. Let $S$ be the set of sampled vertices. 

\hspace{1cm} 3.  Compute $S'$ where $S'$ is the set of locally superior vertices in $S$.

\hspace{1cm} 4.  Return $\frac{n}{s}|S'|$ as an estimation for $\ell(G)$.
 \caption{The high-level description of the estimator for $\ell(G)$}
 \label{alg:locally superior}
\end{algorithm}

\begin{lemma} 
\label{lem:locally superioralg}
Assuming $s \ge \frac{n}{\ell(G)}$, the high-level estimator in Algoirthm \ref{alg:locally superior}
  returns a $1+\eps$ factor approximation of $\ell(G)$ with probability 
  at least $7/8$.
\end{lemma}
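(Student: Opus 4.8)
The plan is to analyze a single run of the inner estimator, show it is unbiased with variance at most $\ell(G)^2$, and then boost the confidence by averaging over the $r$ independent repetitions and applying Chebyshev's inequality.

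First I would fix notation: let $L\subseteq V$ be the set of \emph{all} locally superior vertices of $G$, so $\ell(G)=|L|$, and note that membership in $L$ is a property of $G$ alone (a vertex $u$ lies in $L$ iff it has a neighbour $v$ with $\deg(u)\ge\deg(v)$). Hence in step 3 we have $S'=S\cap L$, and $|S'|$ depends only on how many of the $\ell(G)$ locally superior vertices were drawn. Since step 1 samples $s$ vertices uniformly without replacement from the $n$-vertex set $V$, the variable $|S'|$ is hypergeometric with population size $n$, number of ``good'' items $\ell(G)$, and $s$ draws. In particular $\Ex{|S'|}=s\,\ell(G)/n$, so the output $X=\tfrac{n}{s}|S'|$ of one run satisfies $\Ex{X}=\ell(G)$; the reported average of $r$ such runs is therefore also unbiased for $\ell(G)$.

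Next I would bound the variance of one run. Using the standard hypergeometric variance formula $\Var(|S'|)=s\cdot\frac{\ell(G)}{n}\cdot\frac{n-\ell(G)}{n}\cdot\frac{n-s}{n-1}\le s\,\ell(G)/n$, we obtain $\Var(X)=\frac{n^2}{s^2}\Var(|S'|)\le \frac{n\,\ell(G)}{s}$. This is exactly where the hypothesis $s\ge n/\ell(G)$ is used: it gives $n/s\le\ell(G)$ and hence $\Var(X)\le\ell(G)^2$. Then, letting $X_1,\dots,X_r$ be the outcomes of the $r$ independent runs and $\bar X=\frac1r\sum_i X_i$ the reported value, we have $\Ex{\bar X}=\ell(G)$ and $\Var(\bar X)=\Var(X_1)/r\le \ell(G)^2/r$, so Chebyshev's inequality yields
$$\Pr\!\big[\,|\bar X-\ell(G)|>\eps\,\ell(G)\,\big]\le\frac{\Var(\bar X)}{\eps^2\ell(G)^2}\le\frac{1}{r\eps^2}\le\frac18,$$
using $r=\lceil 8/\eps^2\rceil\ge 8/\eps^2$. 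This gives the claimed $(1+\eps)$-approximation with probability at least $7/8$.

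There is no serious obstacle here; the two points that need care are the observation that ``locally superior'' is a global property of $G$ (so that the sampling is well defined and $|S'|$ is genuinely hypergeometric) and the variance estimate for sampling without replacement — which is dominated by the with-replacement bound, so even a crude argument suffices. One could alternatively invoke a Chernoff/Hoeffding tail bound for the hypergeometric distribution for a sharper concentration statement, but with $r=\Theta(\eps^{-2})$ repetitions the elementary Chebyshev argument already closes the proof.
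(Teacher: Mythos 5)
Your proof is correct and follows essentially the same route as the paper: an unbiased single-run estimator with variance at most $\frac{n}{s}\ell(G)$, averaging over the $r=\lceil 8/\eps^2\rceil$ repetitions, and Chebyshev's inequality together with $s\ge n/\ell(G)$. The only cosmetic difference is that you quote the hypergeometric variance formula where the paper computes the second moment explicitly via indicator variables for sampling without replacement; the two computations are equivalent.
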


\begin{proof}
 Fix a parallel repetition of the algorithm and let $X$ denote the outcome of the associated estimator. 
Assuming an arbitrary ordering on the locally superior vertices, 
let $X_i$ denote the random variable associated with $i$-th locally superior vertex. 
We define $X_i=1$ if the $i$-th locally superior vertex has been sampled, otherwise $X_i=0$. 
We have $X = \frac{n}{s} \sum_{i=1}^{\ell(G)} X_i$.  Since $Pr(X_i=1)=\frac{s}{n}$, we get $E[X]=\ell(G)$.
Further we have
\begin{align*}
E[X^2]= \frac{n^2}{s^2}E\Big[\sum_{i,j}^{\ell(G)}X_iX_j\Big] & = 
\frac{n^2}{s^2}\Big[\sum_i^{\ell(G)}E[X_i^2]+\sum_{i\neq j}^{\ell(G)}E[X_iX_j]\Big] \\
 & = \frac{n^2}{s^2}\Big[\frac{s}{n}\ell(G)+{\ell(G) \choose 2}\frac{s(s-1)}{n(n-1)}\Big] \\
  & = \frac{n}{s}\ell(G)+{\ell(G) \choose 2}\frac{n(s-1)}{s(n-1)} \\
 & < \frac{n}{s}\ell(G) + \ell^2(G)
\end{align*}
 
Consequently,
$Var[X]=E[X^2]-E^2[X]< \frac{n}{s}\ell(G).$ 

Let $Y$ be the average of the outcomes of $r$ parallel and independent repetitions of the 
basic estimator. We have $E[Y]=\ell(G)$ and $Var[Y]<\frac{n}{sr}\ell(G)$. 
Using the Chebyshev's inequality,
$$Pr(|Y-E[Y]|\ge \eps E[Y])\le \frac{Var[Y]}{\eps^2E^2[X]}<\frac{n/s}{r\eps^2 \ell(G)}.$$
Setting $r=\frac8{\eps^2}$ and $s \ge \frac{n}{\ell(G)}$, the above 
probability will be less than $1/8$.
\end{proof}

\subsection{The streaming algorithm}
We first note that we can implement the high-level estimator of
Algorithm \ref{alg:locally superior} in
the vertex-arrival stream model using $O(\frac{s}{\eps^2}\log n)$ space.
Consider a single repetition of the estimator. 
The sampled set $S$ is selected in the beginning
of the algorithm (before the stream.) This can be done
using a reservoir sampling strategy \cite{Vitter85} in $O(|S|\log n)$ space. 
To decide if $u \in S$ is locally superior or not,
we just need to store $\deg(u)$ and 
 the minimum degree of the neighbors that are visited so far.
This takes $O(\log n)$ bits of space. As result,
the whole space needed to implement a single repetition is $O(s\log n)$ bits. 
 
The streaming algorithm runs two threads in parallel. 
In one thread it runs the streaming implementation
of Algorithm \ref{alg:locally superior} after setting $s=\lceil \sqrt{n} \:\rceil$.
In the other thread, it runs a greedy algorithm to find 
a maximal matching in the input graph. We stop the greedy 
algorithm whenever the size of the discovered matching $F$ 
exceeds $\sqrt{n}$. In the end, if $|F| < \sqrt{n}$, we output $|F|$ as an approximation for $m(G)$, otherwise we report the outcome of the first thread. 

Note that if $|F| < \sqrt{n}$, $F$ is a maximal matching in $G$. Hence
$|F| \ge \frac12 m(G)$. Assume $|F| \ge \sqrt{n}$. 
In this case the algorithm outputs the result of first thread. In this case, by Lemma
\ref{lem:locally superior}, we know $\ell(G) \ge \sqrt{n}$. Consequently, it follows from
Lemma \ref{lem:locally superioralg}, the first thread returns a $1+O\eps$ approximation 
of $\ell(G)$ and consequently it returns a $3.5+O(\eps)$ approximation of $m(G)$. 
Since the greedy algorithm takes
at most $O(\sqrt{n})$ space, the space complexity of the algorithm is 
dominated by the space usage of the first thread.
We get the following result. 
   
\begin{theorem} 
\label{thm:planar}
Let $G$ be a planar graph. There is a streaming algorithm 
(in the vertex-arrival model) that returns a $3.5+\epsilon$ factor approximation of 
$m(G)$ using $O(\frac{\sqrt{n}}{\epsilon^2})$ space.
\end{theorem}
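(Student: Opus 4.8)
The plan is to combine two complementary subroutines whose costs and guarantees together certify a $(3.5+\eps)$ approximation. First I would observe that the greedy maximal matching $F$ always satisfies $m(G)/2 \le |F| \le m(G)$, so if $|F|$ is small — say below $\sqrt{n}$ — we can afford to run greedy to completion, store all of $F$ in $O(\sqrt{n}\log n)$ space, and simply report $|F|$, which is already a $2$-approximation (and hence a $3.5$-approximation). The only subtlety is that greedy in the vertex-arrival model can be implemented by scanning each adjacency list and adding an edge to $F$ whenever both endpoints are currently unmatched; this needs only a marked-vertex set, which stays within $O(\sqrt{n}\log n)$ bits as long as we abort once $|F|$ would exceed $\sqrt{n}$.

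Second, I would handle the complementary case $|F| \ge \sqrt{n}$. Here maximality of $F$ gives $m(G) \ge |F| \ge \sqrt{n}$, and by the left inequality of Lemma~\ref{lem:locally superior} we get $\ell(G) \ge m(G) \ge \sqrt{n}$, so the hypothesis $s \ge n/\ell(G)$ of Lemma~\ref{lem:locally superioralg} is met with the choice $s = \lceil\sqrt{n}\,\rceil$. Then the first thread returns a value $Y$ with $Y \in (1\pm\eps)\ell(G)$ with probability at least $7/8$. By Lemma~\ref{lem:planar}, $\ell(G) \le 3.5\, m(G)$, and by Lemma~\ref{lem:locally superior}, $\ell(G) \ge m(G)$, so $Y/(1+\eps) \le \ell(G)$ and $Y$ lies between $m(G)(1-\eps)$ and $3.5\,m(G)(1+\eps)$; rescaling $\eps$ by a constant yields a $(3.5+\eps)$-approximation. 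I should note explicitly that the vertex-arrival streaming implementation of Algorithm~\ref{alg:locally superior} with $s=\lceil\sqrt{n}\,\rceil$ costs $O(\frac{\sqrt{n}}{\eps^2}\log n)$ bits, as argued in the text: reservoir sampling of $S$ in $O(|S|\log n)$, and for each $u\in S$ the pair $(\deg(u),\min_{v\in N(u)}\deg(v))$ updated online in $O(\log n)$.

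The decision procedure ties the cases together: run both threads in parallel, and at the end output $|F|$ if $|F| < \sqrt{n}$ and the first thread's estimate otherwise. The correctness argument is exactly the case split above, and the space is $O(\frac{\sqrt{n}}{\eps^2}\log n) + O(\sqrt{n}\log n) = O(\frac{\sqrt{n}}{\eps^2}\log n)$ — matching the claimed bound once one reads the $\epsilon$-dependence into the statement's $O(\sqrt{n}/\epsilon^2)$ (the $\log n$ factor is absorbed by the word-size convention, or one writes $O(\frac{\sqrt n}{\eps^2}\log n)$ bits).

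The only genuinely delicate point — and the one I would be most careful about — is the probability bookkeeping: Lemma~\ref{lem:locally superioralg} gives success probability $7/8$ for a single invocation of the full $r$-fold estimator, but that is already the boosted guarantee, so no further amplification is needed; one just has to make sure the failure event of the estimator and the deterministic correctness of the greedy branch do not interact, which they do not since the branch is selected by the observed value $|F|$, not by any random internal state of the estimator. Everything else is routine: plugging $s=\lceil\sqrt n\rceil$ into the earlier space accounting and into Lemma~\ref{lem:locally superioralg}, and chaining the inequalities $m(G)\le \ell(G)\le 3.5\,m(G)$ with the $(1\pm\eps)$ estimate.
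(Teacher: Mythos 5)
Your proposal is correct and follows essentially the same route as the paper's own argument: run the streaming implementation of Algorithm~\ref{alg:locally superior} with $s=\lceil\sqrt{n}\,\rceil$ in parallel with a greedy maximal matching capped at $\sqrt{n}$ edges, output $|F|$ when $|F|<\sqrt{n}$ (maximality gives a $2$-approximation), and otherwise invoke Lemma~\ref{lem:locally superioralg} (whose hypothesis $s\ge n/\ell(G)$ holds since $\ell(G)\ge m(G)\ge|F|\ge\sqrt{n}$) together with Lemmas~\ref{lem:locally superior} and~\ref{lem:planar} to sandwich the estimate between $(1-\eps)m(G)$ and $(3.5+O(\eps))m(G)$. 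The space accounting and the case analysis match the paper's proof, so nothing further is needed.
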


%
\subsection{A simultaneous communication protocol}
To describe the simultaneous protocol, we consider two cases separately: (a) 
when the matching size is low; to be precise, when it is smaller than some fixed 
value $k=n^{1/3}$, and (b) when the matching size is high, \ie at least 
$\Omega(k)$. For each case, we describe a separate solution. The overall protocol 
will be these solutions (run in parallel)  
combined with a sub-protocol (in parallel) to distinguish between the cases. 

\paragraph*{Graphs with large matching size}
In the case when matching size is large, similar to what was done in the 
streaming model, we run an implementation of Algorithm 1
in the simultaneous model. To see the implementation, in the 
simultaneous model all 
the players (including the referee) know the sampled set $S$. This results
from access to the shared randomness. For 
each $u \in S$, the players send the minimum degree of the neighbors
of $u$ in his input to the referee. The player
that owns $u$, also sends $\deg(u)$ to the referee.
 Having received this information, the referee can decide
 if $u$ is a locally superior vertex or not. 
 As result, we can implement Algorithm \ref{alg:locally superior} in
 the simultaneous model using a protocol with 
 $O(\frac{s}{\eps^2}\log n)$ message size.

%
\paragraph*{Graphs with small matching size}
In the case where the matching size is small, we use the edge-sampling method
of \cite{CCEHMMV16}. We review their
basic sampling primitive in its general form. 
Given a graph $G(V,E)$,
let $c:V \rightarrow [b]$ be 
a totally random function that assigns each vertex in $V$ 
a random number (color) in $[b]=\{1,\ldots,b\}$. 
The set $\textrm{Sample}_{b,d,1}$ is a random subset of $E$ 
picked in the following way. Given a subset $K \subseteq [b]$ of size $d \in \{1,2\}$, 
let $E_K$ be the edges of $G$ where the color of their endpoints 
matches $K$. For example when $K=\{3,4\}$, the set $E_{\{3,4\}}$ 
contains all edges $(u,v)$ such that $\{c(u),c(v)\}=\{3,4\}$. For 
all $K \subseteq [b]$ of size $d$, the set $\textrm{Sample}_{b,d,1}$ 
picks a random edge from $E_{K}$. Finally, 
the random set $\textrm{Sample}_{b,d,r}$ is the union of $r$ independent 
instances of $\textrm{Sample}_{b,d,1}$. 
We have the following lemma from \cite{CCEHMMV16} (see
Theorems 4 in the reference.)

\begin{lemma} 
\label{lem:Sample}
Let $G=(V,E)$ be a graph. 
When $m(G) \le k$, with probability $1-1/\poly(k)$, 
the random set $\textrm{Sample}_{100k,2,O(\log k)}$  contains a matching 
of size $m(G)$. 
\end{lemma}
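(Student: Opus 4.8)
\medskip
\noindent\emph{Proof proposal.} The plan is to recover, with the claimed probability, a full maximum matching inside the sample by a round-by-round argument, using one fresh coloring per round. Fix a maximum matching $M^*$, so $|M^*|=m(G)\le k$. Over the $r=\Theta(\log k)$ rounds I would maintain two objects: a matching $F_t\subseteq\textrm{Sample}_{100k,2,r}$ (the edges recovered so far) together with a ``residual target'' $M^{(t)}\subseteq M^*$ consisting of exactly $m(G)-|F_t|$ edges of $M^*$, all vertex-disjoint from $V(F_t)$; initially $F_0=\emptyset$ and $M^{(0)}=M^*$. The point of keeping $|M^{(t)}|=m(G)-|F_t|$ is that then $|V(M^{(t)})\cup V(F_t)|=2m(G)\le 2k$ at every step, no matter how large $F_t$ has grown.

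First I would analyze a single round $t$, with coloring $c=c_t$. Call an edge $e=(u,v)\in M^{(t-1)}$ \emph{good} if $c(u)\ne c(v)$ and no other vertex of $V(M^{(t-1)})\cup V(F_{t-1})$ receives color $c(u)$ or $c(v)$. Since there are at most $2m(G)-2\le 2k-2$ such ``other'' vertices and $b=100k$ colors, Bernoulli's inequality gives $\Pr[e\text{ good}]\ge(1-1/b)(1-2/b)^{2k-2}\ge(99/100)(24/25)>19/20=:c_0$, and this holds conditioned on the first $t-1$ rounds, since $c_t$ is independent of them. For a good $e$ the color class $K_e=\{c(u),c(v)\}$ is nonempty as an edge class (it contains $e$) and is ``private'' to $e$: the $t$-th instance of $\textrm{Sample}$ picks some $f_e\in E_{K_e}$, and both endpoints of $f_e$ carry colors from $K_e$, hence $f_e$ is vertex-disjoint from $V(F_{t-1})$ and from $f_{e'}$ for every other good $e'$ (whose endpoints, lying in $V(M^{(t-1)})\setminus\{u,v\}$, avoid the colors of $K_e$). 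Therefore $F_t:=F_{t-1}\cup\{f_e: e\in M^{(t-1)}\text{ good}\}$ is again a matching inside the sample, and $M^{(t)}:=M^{(t-1)}\setminus\{\text{good }e\}$ preserves all invariants: disjointness from $V(F_t)$ is exactly the privacy property, and $|M^{(t)}|=m(G)-|F_t|$ follows by counting the good edges.

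Next I would run a potential argument on $D_t:=m(G)-|F_t|=|M^{(t)}|$. Each good edge of round $t$ decreases $D$ by one, so in expectation over the fresh coloring $D_t\le(1-c_0)D_{t-1}$, and iterating (tower rule) $\mathbf{E}[D_r]\le(1-c_0)^r\,m(G)\le(1-c_0)^r k$. Choosing $r=\Theta(\log k)$ large enough makes this at most $1/\poly(k)$; since $D_r$ is a non-negative integer, Markov's inequality gives $\Pr[D_r\ge 1]\le 1/\poly(k)$. On the complementary event $D_r=0$, the set $F_r\subseteq\textrm{Sample}_{100k,2,r}$ is a matching of size $m(G)$, which is the lemma.

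The main obstacle is the choice of the residual target $M^{(t)}$, and it is worth pointing out why the naive argument fails: if one simply colors the $2k$ endpoints of $M^*$ and hopes they all land in distinct color classes, that event has probability only $e^{-\Theta(k)}$ with $b=O(k)$ colors, far too small to amplify with $O(\log k)$ repetitions. Shrinking $M^{(t)}$ and demanding color-privacy only against the \emph{current} set $V(M^{(t)})\cup V(F_t)$ keeps that set of size $2m(G)\le 2k$ forever, so the per-round success probability stays bounded below by the absolute constant $c_0$; this is exactly what turns the progress into a geometric decay of $D_t$ instead of letting it stall around $\Theta(m(G))$. A secondary point to handle carefully is that the recovered edge $f_e$ need not equal $e$ itself (the sampled edge of a private class can be any edge of that class); this is harmless, because $f_e$ is a genuine edge of $G$ disjoint from everything selected so far and may simply be used in $F_t$ in place of $e$.
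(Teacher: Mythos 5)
The paper itself does not prove this lemma: it is imported verbatim from \cite{CCEHMMV16} (Theorem 4 there), so there is no internal proof to compare your argument against; it has to be judged on its own, and it checks out. Your round-by-round scheme is sound: the invariant $|V(M^{(t)})\cup V(F_t)|=2m(G)\le 2k$ is exactly what keeps the per-round isolation probability bounded below by an absolute constant with only $b=100k$ colors, the privacy of the pair $K_e$ guarantees that the edge the sample returns for $K_e$ (which, as you note, need not be $e$) avoids $V(F_{t-1})$, the other selected edges, and the surviving residual edges, so the invariants propagate, and the geometric decay of $D_t$ combined with Markov's inequality yields failure probability $1/\poly(k)$ after $O(\log k)$ independent instances; this is in the same spirit as the kernelization argument of \cite{CCEHMMV16}, but your write-up is self-contained and the constants ($\ge 19/20$ per round) are verified correctly. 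One small wording fix: when you argue $f_e$ and $f_{e'}$ are disjoint you say the endpoints of $f_{e'}$ lie in $V(M^{(t-1)})\setminus\{u,v\}$ --- that holds for $e'$, not necessarily for the sampled edge $f_{e'}$; the correct (and clearly intended) chain is that the endpoints of $e'$ avoid the colors of $K_e$, hence $K_{e'}\cap K_e=\emptyset$, hence the endpoints of $f_{e'}$, being colored from $K_{e'}$, cannot coincide with those of $f_e$. With that sentence tightened, your proof is a valid free-standing substitute for the citation.
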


Note that, in the simultaneous vertex-partition model,
the referee can obtain an instance
of $\textrm{Sample}_{b,d,1}$ 
 via a protocol with
$O(b^d\log n)$ message size. To see this, using the shared 
randomness, the players pick the random function $c: V \rightarrow [b]$. 
Let $E^{(i)}$ be the subset of edges owned by the $i$-th player. 
We have $E=\bigcup_{i=1}^t E^{(i)}$.  To pick a random edge 
from $E_K$ for a given $K \subseteq [b]$, 
the $i$-th player randomly picks an edge  $e \in E_K \cap E^{(i)}$ 
and sends it along with $|E_K \cap E^{(i)}|$ to the referee. After 
receiving this information from all the players, the referee can 
generate a random element of $E_K$.  Since there 
are $O(b^d)$ different  $d$-subsets of $[b]$, the size
of the message from a player to the referee 
is bounded by $O(b^d\log n)$ bits. 
Consequently, the referee can produce a rightful instance of $\textrm{Sample}_{b,d,r}$
using $O(rb^d\log n)$ communication from each player. 

\paragraph*{How to distinguish between the cases?}
To accomplish this, here we use a degree-based estimator
by Mcgregor and Vorotnikova \cite{MV16} described
 in the following lemma.
\begin{lemma}
\label{lem:vorotnikova}
Let $G$ be a planar graph. We have $$m(G)\le A'(G)=\sum_{u\in V}\min\{\deg(u)/2,4-\deg(u)/2\}
\le 12.5 \:m(G).$$
\end{lemma}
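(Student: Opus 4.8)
The plan is to verify the claimed inequalities $m(G) \le A'(G) \le 12.5\, m(G)$ for a planar graph $G$, where $A'(G) = \sum_{u \in V} \min\{\deg(u)/2,\, 4 - \deg(u)/2\}$. Since this is an estimator already attributed to Mcgregor and Vorotnikova \cite{MV16}, I would expect the proof here to either cite it directly or reprove it via the fractional-matching / Edmonds-polytope route sketched in the introduction for the $(\alpha+1)\sum_{(u,v)}\min\{1/\deg(u), 1/\deg(v), 1/(\alpha+1)\}$ estimator specialized to $\alpha = 3$. First I would observe that the summand $\min\{\deg(u)/2, 4-\deg(u)/2\}$ is maximized at $\deg(u) = 4$, where it equals $2$, and is nonnegative exactly when $0 \le \deg(u) \le 8$; for $\deg(u) > 8$ the contribution is negative, so I would first rewrite $A'(G)$ in a form that makes the connection to edge weights transparent, namely distributing a weight to each incident edge.

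For the lower bound $m(G) \le A'(G)$: the standard approach is to assign to each edge $(u,v)$ the weight $w(u,v) = \min\{1/\deg(u), 1/\deg(v), 1/4\}$ (the $\alpha+1 = 4$ version) and check that $\{w(e)\}$ is a feasible point of the fractional matching polytope of a planar (arboricity-3) graph scaled appropriately; then $A'(G)$ should come out equal to (a constant times) $\sum_e w(e)$ after summing the per-vertex caps, and the Edmonds blossom-polytope argument from \cite{MV16} gives that $4\sum_e w(e)$ dominates $m(G)$. The cleaner route, which I would actually write, is to exhibit directly that $A'(G) \ge m(G)$ by taking a maximum matching $M$ and charging: for each edge $(u,v) \in M$, show that the combined contribution of $u$ and $v$ to $A'(G)$, together with a suitable share of the contributions of their other neighbors, is at least $1$; the low-arboricity condition (every subgraph on $t$ vertices has $\le 3(t-1)$ edges) is what prevents the high-degree negative terms from overwhelming the sum.

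For the upper bound $A'(G) \le 12.5\, m(G)$: here I would bound $A'(G)$ by dropping the negative terms, so $A'(G) \le \sum_{u : \deg(u) \le 8} \min\{\deg(u)/2, 4-\deg(u)/2\} \le 2\cdot|\{u : \deg(u) \le 8\}|$ is far too lossy; instead I would partition vertices by degree and use that the number of vertices of each degree is controlled by the $3(n-1)$ edge bound, combined with the fact that any maximal matching has size $\ge m(G)/2$ and covers all non-isolated vertices that matter. Concretely, split $A'(G)$ into the contribution of the at most $2m(G)$ matched vertices (each contributing $\le 2$, giving $\le 4m(G)$) plus the contribution of unmatched vertices, all of which lie in $N(M)$; bounding the latter via planarity/arboricity around the $2m(G)$ matched endpoints should yield the remaining $8.5\, m(G)$. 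The main obstacle will be getting the constant exactly $12.5$ rather than some larger bound: the tight analysis requires carefully accounting for how much total $\min\{\deg(u)/2, 4-\deg(u)/2\}$ mass a single matched vertex can be adjacent to in a planar graph, and balancing the two competing regimes (many low-degree neighbors each contributing $\Theta(1)$ versus a few degree-4-ish neighbors contributing $2$). Given that the paper only uses this lemma as a black box to distinguish "large" from "small" matching size, I would expect the write-up here to simply invoke \cite{MV16} for the proof rather than reproduce it.
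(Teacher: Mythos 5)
The paper never proves this lemma: it is imported from McGregor and Vorotnikova \cite{MV16} and used purely as a black box to decide whether the matching size is above or below the threshold $k/12.5$, exactly as you predict in your closing sentence. So on the level of approach you coincide with the paper --- a citation is all that appears there, and a citation is all that is needed.

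As a standalone argument, though, your sketch would not yet establish the statement, and two of its pivots are shaky. First, $A'(G)$ is not ``a constant times'' the edge quantity $\sum_{(u,v)\in E}\min\{1/\deg(u),1/\deg(v),1/4\}$ obtained by ``summing the per-vertex caps'': the passage from the edge-based estimator to a vertex-based one only goes through one-sided inequalities (e.g.\ bounding each edge weight by the average of the two per-endpoint caps), and this relaxation is precisely why the approximation factor degrades from $\alpha+2=5$ to $(\alpha+2)^2/2=12.5$ rather than staying at $5$; moreover the naive per-endpoint split yields $\sum_u\min\{\deg(u)/2,\,2\}$, not the term $4-\deg(u)/2$ that goes negative for $\deg(u)>8$, so the specific form of $A'(G)$ is not recovered by the route you describe. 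Second, both directions are left as intentions: the lower bound $m(G)\le A'(G)$ must absorb the negative contributions of high-degree vertices (your ``suitable share of the contributions of their other neighbors'' is exactly the missing argument), and the upper bound stops at ``should yield the remaining $8.5\,m(G)$''; pinning the constant at $12.5$ is the actual content of the proof in \cite{MV16}, which your outline defers to rather than replaces. None of this affects the paper, which only needs the statement itself, but if you were asked to prove the lemma rather than quote it, the sketch would need the fractional-matching argument of \cite{MV16} (or an equivalent) filled in.
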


It is easy to see that, in the simultaneous vertex-partition model,
we can implement this estimator with $O(\log n)$ bits communication
from each player.  

\paragraph*{The final protocol}
Let $k=\lceil n^{1/3} \rceil$. We run the following threads
in parallel.  
\begin{enumerate}
\item A protocol that implements the high-level estimator (Algorithm \ref{alg:locally superior}) with $s=
\lceil 12.5 n/k\rceil$ as its 
input parameter according to the discussions above. 
Let $z_1$ be the output of this protocol.
\item A protocol to compute an instance of $\textrm{Sample}_{b,d,r}$ for 
$b=100k$ and $d=2$ and $r=O(\log k)$. Let $z_2$ be the size of maximum matching in the  
sampled set. 

\item A protocol to compute $A'(G)$. Let $z_3$ be the output of
this thread. 
\end{enumerate}

In the end, if $z_3 \ge  \frac{k}{12.5} $, the referee outputs $z_1$ as an approximation for $m(G)$, 
otherwise the referee reports $z_2$ as the final answer.

\begin{theorem}
Let $G$ be a planar graph on $n$ vertices. The above simultaneous protocol
with probability $3/4$
returns a $3.5+O(\eps)$ approximation
 of $m(G)$ where each player sends $O(\frac{n^{2/3}}{\eps^2})$ bits
 to the referee.
\end{theorem}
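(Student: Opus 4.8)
The plan is to verify correctness and communication cost of the three-thread protocol by a case analysis on the true matching size $m(G)$, using the threshold $k = \lceil n^{1/3} \rceil$ together with the estimator guarantees already established. First I would record the communication cost, which is the easy part: thread~1 uses $O(\frac{s}{\eps^2}\log n) = O(\frac{n/k}{\eps^2}\log n) = O(\frac{n^{2/3}}{\eps^2}\log n)$ bits per player (since $s = \lceil 12.5\,n/k\rceil = O(n^{2/3})$ and the per-sample cost is $O(\log n)$), thread~2 uses $O(r\,b^d\log n) = O(k^2\log k\log n) = O(n^{2/3}\log^2 n)$ bits per player by the $\textrm{Sample}_{b,d,r}$ implementation discussed above, and thread~3 uses $O(\log n)$ bits per player for $A'(G)$. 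Summing, each player sends $O(\frac{n^{2/3}}{\eps^2}\log n)$ bits, as claimed (absorbing the $\log$ factor appropriately).

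Next I would argue correctness. Condition on the good events of all randomized subroutines; by a union bound over the constant number of threads, these hold simultaneously with probability at least $3/4$ (each subroutine is invoked with success probability that can be boosted to, say, $7/8$ or $1-1/\poly(k)$). Thread~3 computes $A'(G)$ exactly and deterministically; by Lemma~\ref{lem:vorotnikova}, $m(G) \le A'(G) \le 12.5\,m(G)$, so $A'(G) \ge k/12.5$ iff we are (roughly) in the large-matching regime. I would split into the two branches the referee can take.

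Case (a): the referee outputs $z_2$, i.e.\ $z_3 = A'(G) < k/12.5$. Then $m(G) \le A'(G) < k/12.5 < k$, so by Lemma~\ref{lem:Sample} the set $\textrm{Sample}_{100k,2,O(\log k)}$ contains a matching of size $m(G)$, hence $z_2 = m(G)$ exactly --- a $1$-approximation, which is certainly within $3.5 + O(\eps)$. Case (b): the referee outputs $z_1$, i.e.\ $A'(G) \ge k/12.5$. Then $A'(G) \le 12.5\,m(G)$ forces $m(G) \ge k/156.25$, and since $\ell(G) \ge m(G)$ by Lemma~\ref{lem:locally superior}, we get $\ell(G) \ge k/156.25$, hence $n/\ell(G) \le 156.25\,n/k \le 12.5\,s$ --- wait, I need the cleaner bound $s \ge n/\ell(G)$ to apply Lemma~\ref{lem:locally superioralg}. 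Here is where the constant $12.5$ in $s = \lceil 12.5\,n/k\rceil$ is chosen: since $A'(G) \le 12.5\,m(G) \le 12.5\,\ell(G)$, the condition $A'(G) \ge k/12.5$ gives $\ell(G) \ge A'(G)/12.5 \ge k/12.5^2$, which would actually only give $s \ge n/\ell(G)$ if the constant were $12.5^2$. I expect the intended bookkeeping is that the test $z_3 \ge k/12.5$ guarantees $A'(G) \ge k/12.5$, hence $\ell(G) \ge m(G) \ge A'(G)/12.5 \ge k/12.5^2$; then one must have set $s$ large enough that $s \ge n/\ell(G)$, i.e.\ $s = \Theta(n/k)$ with the right constant. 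The main obstacle in writing this cleanly is precisely tracking these constants so that the hypothesis $s \ge n/\ell(G)$ of Lemma~\ref{lem:locally superioralg} is genuinely met in case (b); once it is, that lemma gives $z_1 \in (1\pm\eps)\ell(G)$, and combining with Lemma~\ref{lem:planar} ($\ell(G) \le 3.5\,m(G)$) and $m(G) \le \ell(G)$ yields $z_1 \in [(1-\eps)m(G),\ (3.5+O(\eps))m(G)]$, i.e.\ a $3.5 + O(\eps)$ approximation. Rescaling $\eps$ by a constant at the start absorbs the $O(\cdot)$, completing the proof.
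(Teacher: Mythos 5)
Your proposal is correct and follows essentially the same route as the paper: the same three-thread case analysis on $z_3$ versus $k/12.5$, invoking Lemma \ref{lem:vorotnikova} to separate the regimes, Lemma \ref{lem:Sample} for the small-matching branch, and Lemmas \ref{lem:locally superior}, \ref{lem:locally superioralg} and \ref{lem:planar} for the large-matching branch, with the same communication accounting. The constant-tracking issue you flag is real --- the paper's assertion that $z_3 \ge k/12.5$ forces $m(G) \ge k/12.5$ only yields $m(G) \ge k/12.5^2$, so the constant in $s$ (or the threshold) must be adjusted as you indicate --- but since the theorem's bounds are asymptotic this is a harmless constant-factor fix and does not change the argument.
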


\begin{proof}
First we note that by choosing the constants large enough, we can assume the thread (2) errs with probability 
at most $1/8$. 
If $z_3 \ge \frac{k}{12.5} $, then we know $m(G) \ge \frac{k}{12.5} $.
This follows from Lemma \ref{lem:vorotnikova}. Consequently by Lemma
\ref{lem:locally superior}, we have
$\ell(G)\ge \frac{k}{12.5}$. Therefore from Lemma 
\ref{lem:locally superioralg}, we 
have $|z_1-\ell(G)|\le \eps \ell(G)$ with probability at least $7/8$.
It follows from Lemma \ref{lem:planar} that
$(1-\eps)m(G) \le z_1 \le (3.5 +3.5\eps) m(G).$

On the other hand, if $z_3 < \frac{k}{12.5}$,  by Lemma \ref{lem:vorotnikova}
we know that $m(G)$ must be less than $k$. Having this, from Lemma  
\ref{lem:Sample}, with probability at
least $7/8$, we get $z_2 = m(G)$. In this case the protocol 
computes the exact matching size of the graph. 

The communication
complexity each player is dominated by the cost of the first thread
which is $O(n^{2/3}\eps^{-2}\log n)$. 
The total error probability is bounded
by $1/4$. This finishes the proof. 
\end{proof}

\newcommand{\Proc}{Proceedings of the~}

\newcommand{\STOC}{Annual ACM Symposium on Theory of Computing (STOC)}
\newcommand{\FOCS}{IEEE Symposium on Foundations of Computer Science (FOCS)}
\newcommand{\SODA}{Annual ACM-SIAM Symposium on Discrete Algorithms (SODA)}
\newcommand{\SOCG}{Annual Symposium on Computational Geometry (SoCG)}
\newcommand{\ICALP}{Annual International Colloquium on Automata, Languages and Programming (ICALP)}
\newcommand{\ESA}{Annual European Symposium on Algorithms (ESA)}
\newcommand{\CCC}{Annual IEEE Conference on Computational Complexity (CCC)}
\newcommand{\RANDOM}{International Workshop on Randomization and Approximation Techniques in Computer Science (RANDOM)}
\newcommand{\APPROX}{International Workshop on  Approximation Algorithms for Combinatorial Optimization Problems  (APPROX)}
\newcommand{\PODS}{ACM SIGMOD Symposium on Principles of Database Systems (PODS)}
\newcommand{\SSDBM}{ International Conference on Scientific and Statistical Database Management (SSDBM)}
\newcommand{\ALENEX}{Workshop on Algorithm Engineering and Experiments (ALENEX)}
\newcommand{\BEATCS}{Bulletin of the European Association for Theoretical Computer Science (BEATCS)}
\newcommand{\CCCG}{Canadian Conference on Computational Geometry (CCCG)}
\newcommand{\CIAC}{Italian Conference on Algorithms and Complexity (CIAC)}
\newcommand{\COCOON}{Annual International Computing Combinatorics Conference (COCOON)}
\newcommand{\COLT}{Annual Conference on Learning Theory (COLT)}
\newcommand{\COMPGEOM}{Annual ACM Symposium on Computational Geometry}
\newcommand{\DCGEOM}{Discrete \& Computational Geometry}
\newcommand{\DISC}{International Symposium on Distributed Computing (DISC)}
\newcommand{\ECCC}{Electronic Colloquium on Computational Complexity (ECCC)}
\newcommand{\FSTTCS}{Foundations of Software Technology and Theoretical Computer Science (FSTTCS)}
\newcommand{\ICCCN}{IEEE International Conference on Computer Communications and Networks (ICCCN)}
\newcommand{\ICDCS}{International Conference on Distributed Computing Systems (ICDCS)}
\newcommand{\VLDB}{ International Conference on Very Large Data Bases (VLDB)}
\newcommand{\IJCGA}{International Journal of Computational Geometry and Applications}
\newcommand{\INFOCOM}{IEEE INFOCOM}
\newcommand{\IPCO}{International Integer Programming and Combinatorial Optimization Conference (IPCO)}
\newcommand{\ISAAC}{International Symposium on Algorithms and Computation (ISAAC)}
\newcommand{\ISTCS}{Israel Symposium on Theory of Computing and Systems (ISTCS)}
\newcommand{\JACM}{Journal of the ACM}
\newcommand{\LNCS}{Lecture Notes in Computer Science}
\newcommand{\RSA}{Random Structures and Algorithms}
\newcommand{\SPAA}{Annual ACM Symposium on Parallel Algorithms and Architectures (SPAA)}
\newcommand{\STACS}{Annual Symposium on Theoretical Aspects of Computer Science (STACS)}
\newcommand{\SWAT}{Scandinavian Workshop on Algorithm Theory (SWAT)}
\newcommand{\TALG}{ACM Transactions on Algorithms}
\newcommand{\UAI}{Conference on Uncertainty in Artificial Intelligence (UAI)}
\newcommand{\WADS}{Workshop on Algorithms and Data Structures (WADS)}
\newcommand{\SICOMP}{SIAM Journal on Computing}
\newcommand{\JCSS}{Journal of Computer and System Sciences}
\newcommand{\JASIS}{Journal of the American society for information science}
\newcommand{\PMS}{ Philosophical Magazine Series}
\newcommand{\ML}{Machine Learning}
\newcommand{\DCG}{Discrete and Computational Geometry}
\newcommand{\TODS}{ACM Transactions on Database Systems (TODS)}
\newcommand{\PHREV}{Physical Review E}
\newcommand{\NATS}{National Academy of Sciences}
\newcommand{\MPHy}{Reviews of Modern Physics}
\newcommand{\NRG}{Nature Reviews : Genetics}
\newcommand{\BullAMS}{Bulletin (New Series) of the American Mathematical Society}
\newcommand{\AMSM}{The American Mathematical Monthly}
\newcommand{\JAM}{SIAM Journal on Applied Mathematics}
\newcommand{\JDM}{SIAM Journal of  Discrete Math}
\newcommand{\JASM}{Journal of the American Statistical Association}
\newcommand{\AMS}{Annals of Mathematical Statistics}
\newcommand{\JALG}{Journal of Algorithms}
\newcommand{\TIT}{IEEE Transactions on Information Theory}
\newcommand{\CM}{Contemporary Mathematics}
\newcommand{\JC}{Journal of Complexity}
\newcommand{\TSE}{IEEE Transactions on Software Engineering}
\newcommand{\TNDE}{IEEE Transactions on Knowledge and Data Engineering}
\newcommand{\JIC}{Journal Information and Computation}
\newcommand{\ToC}{Theory of Computing}
\newcommand{\MST}{Mathematical Systems Theory}
\newcommand{\Com}{Combinatorica}
\newcommand{\NC}{Neural Computation}
\newcommand{\TAP}{The Annals of Probability}
\newcommand{\TCS}{Theoretical Computer Science}
\newcommand{\IPL}{Information Processing Letter}
\newcommand{\Algorithmica}{Algorithmica}

\bibliographystyle{amsalpha}
  \bibliography{gem}
%
\end{document}